\newif\ifFull
\let\oldendproof\endproof
\renewcommand{\endproof}{\qed\oldendproof}
\newtheorem{theorem}{Theorem}[section]
\newtheorem{lemma}[theorem]{Lemma}
\newtheorem{definition}[theorem]{Definition}
\newcommand{\R}{{\bf R}}
\newcommand{\Z}{{\bf Z}}
\newcommand{\floor}[1]{{\left\lfloor #1 \right\rfloor}}
\newcommand{\abs}[1]{{\left\lvert #1 \right\rvert}}
\title{Priority Range Trees}
\author{Michael T. Goodrich \\
Dept. of Computer Science \\
Univ. of California, Irvine \\
\url{http://www.ics.uci.edu/~goodrich/}
\and 
Darren Strash \\
Dept. of Computer Science \\
Univ. of California, Irvine \\
\url{http://www.ics.uci.edu/~dstrash/}
}
\date{}
\title{Priority Range Trees}
\author{Michael T. Goodrich and Darren Strash}
\institute{Department of Computer Science, University of California, Irvine, USA}
\begin{document}

\maketitle

\ifFull
\thispagestyle{empty}
\setcounter{page}{0}
\fi

\begin{abstract}
We describe a data structure, called a \emph{priority range tree},
which accommodates fast orthogonal range reporting queries on
prioritized points.
Let $S$ be a set of $n$ points in the plane, where each
point $p$ in $S$ is assigned a weight $w(p)$ that is polynomial in $n$,
and define the rank of $p$ to be $r(p)=\lfloor \log w(p) \rfloor$. Then the priority
range tree can be used to report all points in a three- or four-sided 
query range $R$ with rank at least $\lfloor \log w \rfloor$ in time $O(\log W/w + k)$,
and report $k$ highest-rank points in $R$ in time
$O(\log\log n + \log W/w' + k)$, where $W=\sum_{p\in S}{w(p)}$,
$w'$ is the smallest weight of any point reported, and $k$ is the
output size. All times assume the standard RAM model of computation.
If the query range of interest is three sided, then the priority 
range tree occupies $O(n)$ space, otherwise $O(n\log n)$ space is used to 
answer four-sided queries. 
These queries are motivated by the Weber--Fechner Law,
which states that humans perceive and interpret data
on a logarithmic scale. 
\end{abstract}
\ifFull
\clearpage
\fi

\section{Introduction}
Range searching is a classic problem that has received much attention
in the Computational Geometry literature (e.g.,
see~\cite{a-rs-97,ae-rsir-99,b-mdc-80,c-fsnaq-86,cg-fc1ds-86,%
dhm-brt-09,fmnt-ldsts-86,l-dsorq-78,m-pst-85,o-edsrs-88}).
In what is perhaps the simplest form of range searching, called 
\emph{orthogonal range reporting}, we are given a rectangular,
axis-aligned query range $R$ and our goal is to report 
the points $p$ contained inside $R$ for a given point set, $S$.

A recent challenge with respect to the deployment and use of range
reporting data structures, however, is that modern data sets
can be massive and the responses to typical queries can be overwhelming.
For example, at the time of this writing, a Google query for
``\texttt{range search}'' results in approximately 363,000,000 hits!
Dealing with this many responses to a query is clearly beyond the capacity of
any individual.

Fortunately, there is a way to deal with this type of
information overload---\emph{prioritize} the data and return
responses in an order that reflects these priorities.
Indeed, the success of the Google search engine is largely due to the
effectiveness of its PageRank 
prioritization scheme~\cite{sp-alshw-98,pbmw-pcrbo-99}.
Motivated by this success, our interest in this paper is on the design 
of data structures that can use similar 
types of data priorities to organize the
results of range queries.

An obvious solution, of course, is to treat priority as a
dimension and use existing higher-dimensional range searching techniques
to answer such three-dimensional queries (e.g., see~\cite{a-rs-97,ae-rsir-99}).
However, this added dimension comes at a cost, in that it 
either requires a logarithmic slowdown in query time or an increase in
the storage costs in order to obtain a logarithmic query time~\cite{alstrup2000}. 
Thus, we are interested in prioritized range-searching solutions that can take
advantage of the nature of prioritized data to avoid viewing
priority as yet another dimension.

\subsection{The Weber-Fechner Law and Zipf's Law}
Since data priority is essentially a perception, it is appropriate to apply
perceptual theory to classify it.
Observed already in the 19th century, in what has come to be known as
the \emph{Weber--Fechner Law}~\cite{Dehaene2003145,h-vdiwf-24}, 
Weber and Fechner observed that, in many instances, there
is a logarithmic relationship between stimulus and perception. 
Indeed, this relationship is borne out in several 
real-world prioritization schemes.

For instance, Hellenistic astronomers used their eyesight
to classify stars into six levels of brightness~\cite{e-hpaa-98}.
Unknown to them, light intensity is perceived by the human eye
on a logarithmic scale, and therefore their brightness levels 
differ by a constant factor. 
Today, star brightness, which is referred to as 
\emph{apparent magnitude}, is still measured on a logarithmic scale~\cite{enc-of-astro}. Furthermore, 
the distribution of stars
according apparent magnitude follows an exponential scale 
(see Fig.~\ref{fig:plots}(a)--(b)). 

\begin{figure}[!t]
\begin{center}
\begin{tabular}{cc}
\includegraphics[width=2.3in]{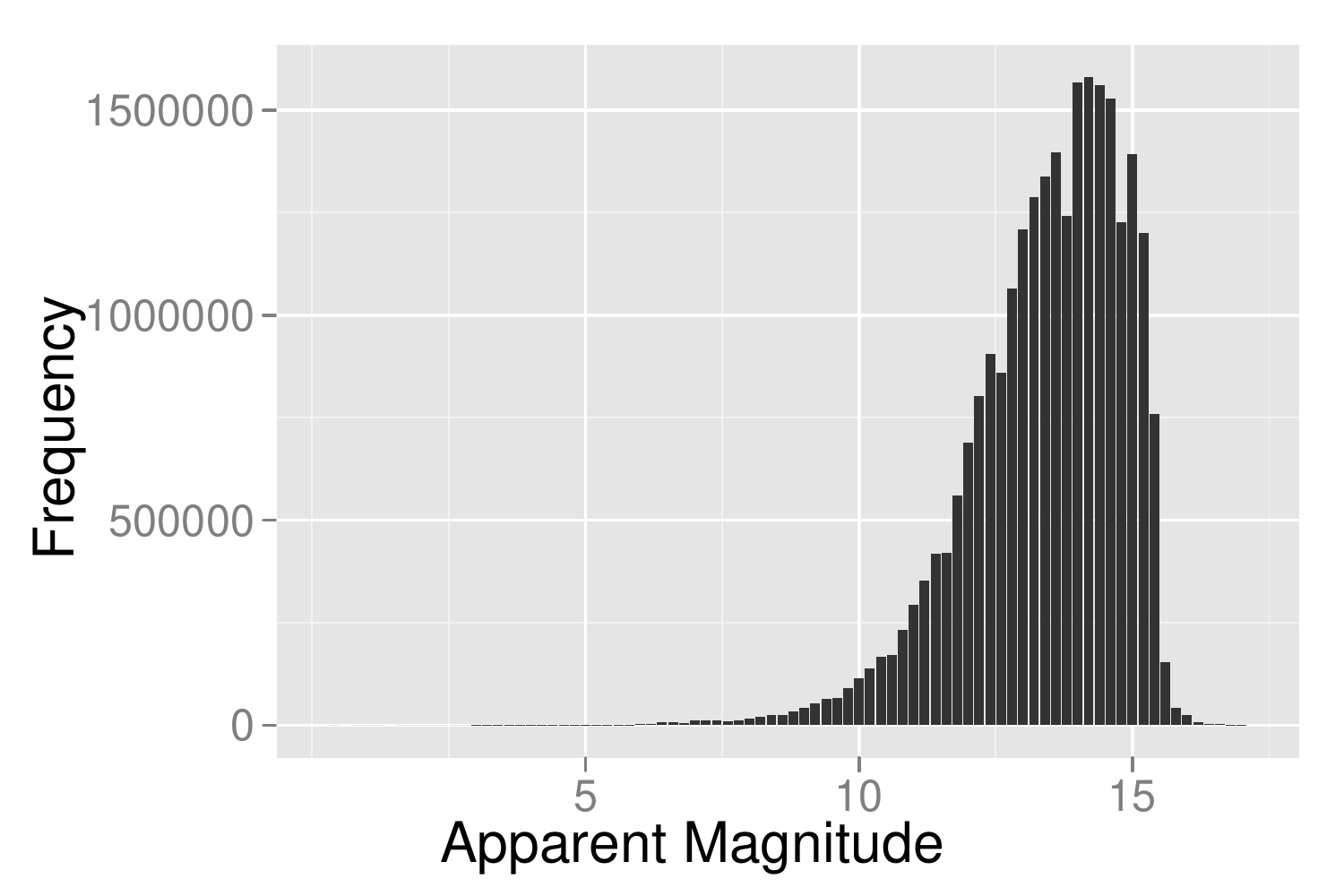} &
\includegraphics[width=2.3in]{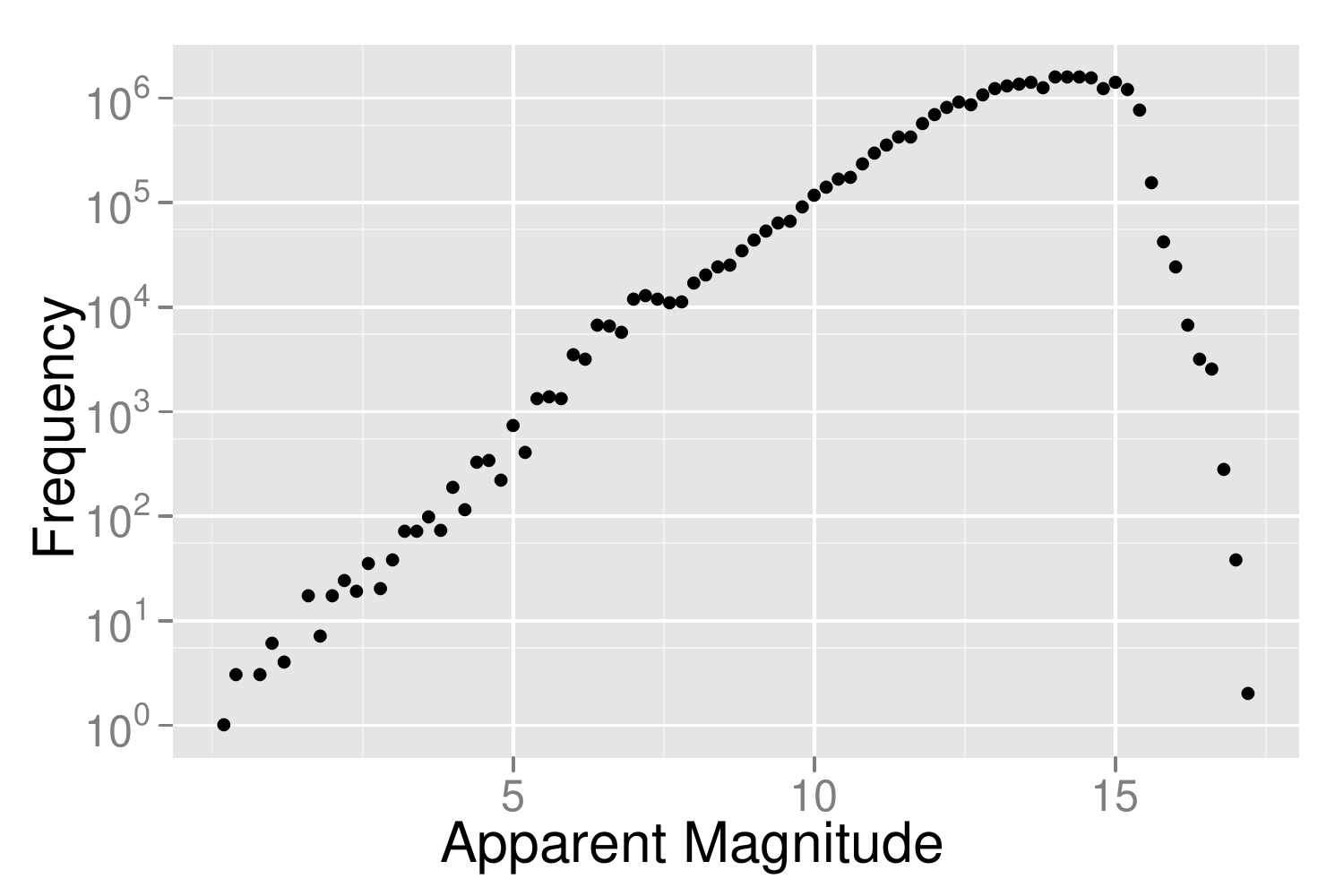}\\
(a)&(b) \\
\includegraphics[width=2.3in]{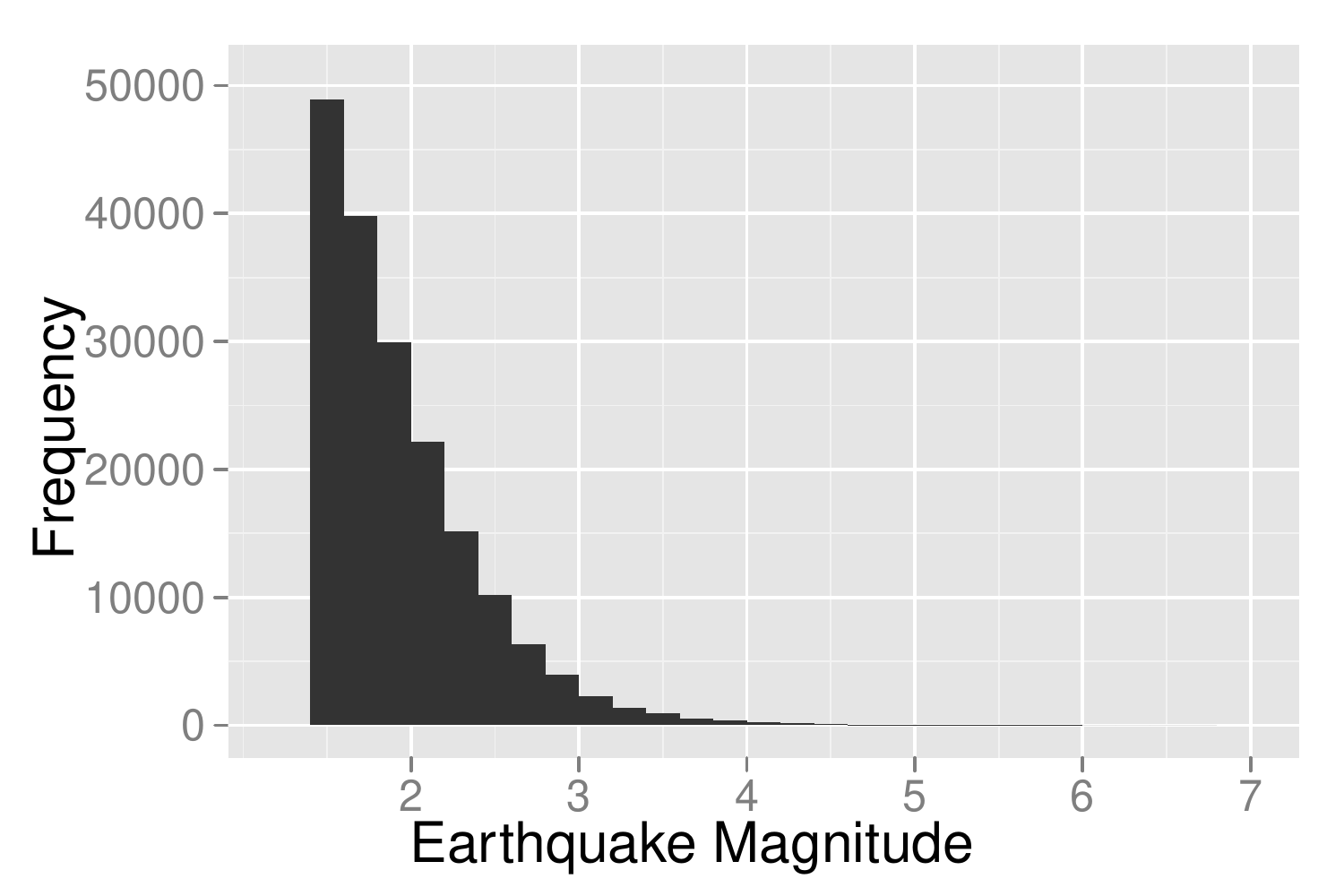} &
\includegraphics[width=2.3in]{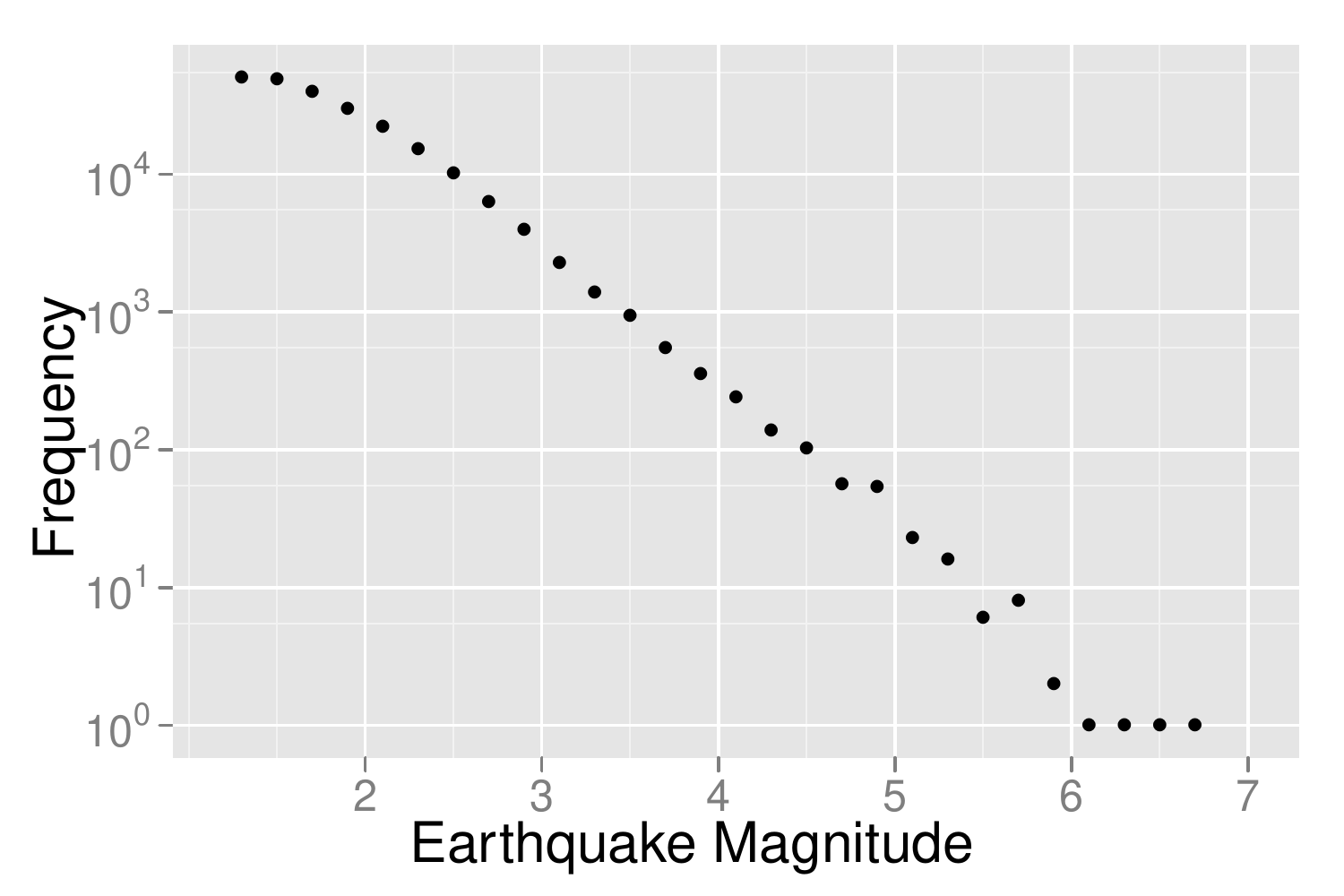}\\
(c)&(d)
\end{tabular}
\caption{(a) Frequency of celestial bodies by apparent magnitude 
for 25 million celestial bodies in the Guide Star Catalog (GSC) 1.2~\cite{GSC1-2}, 
(b) plotted on a log-linear scale. 
(c) Frequency of earthquakes by Richter magnitude for 20 years of 
California earthquakes, (d) plotted on a log-linear scale. Note that because
the measurements are made on a logarithmic scale, the straight line in the 
log-linear plots implies that there is a power law at work.}
\label{fig:plots}
\end{center}
\end{figure}

Logarithmic scale measurements are not confined to the intensity 
of celestial bodies, however. 
Charles Richter's scale~\cite{b-rsdud-89,r-es-58} 
for measuring earthquake magnitude is also logarithmic, 
and earthquake magnitude frequency follows an exponential distribution as well (see Fig.~\ref{fig:plots}(c)--(d)). 
Moreover, as with astrophysical objects, 
range searching on geographic coordinates for 
earthquake occurrences is a common scientific query.

Similar in spirit to the Weber-Fechner Law, Zipf's Law is an
empirical statement about frequencies in real-world data sets. 
Zipf's Law (e.g., see~\cite{fb-irdsa-92})
states that the frequency of a data value in real world
data sets, such as words in documents, is inversely 
proportional to its rank. In other words, 
the relationship between frequency and rank follows a power law.
For example, the popularity of web pages on the Internet
follows such a distribution~\cite{bcl-cws-05}.

\subsection{Problem Statement}
Conventional range searching has no notion of priority. 
All points are considered equal and are dealt with equally. 
Nevertheless, 
as demonstrated by the Weber-Fechner Law and Zipf's Law,
there are many real-world applications where data points are not
created equal---they are prioritized. We therefore aim to 
develop range query data structures that handle these priorities
directly. 
Specifically, we seek to take advantage of prioritization in two ways:
we would like
query time to vary according to the priority of items affecting the query,
and we want to allow for 
items beyond a priority threshold not to be involved in a
given query.

Because of the above-mentioned laws, we feel we can safely
sacrifice some granularity in item weight, focusing instead on 
their logarithm, to fulfill these goals. 
The ultimate design 
goal, of course, is that we desire data structures that provide
meaningful prioritized responses but do not suffer the logarithmic 
slowdown or an increase in space that would come from treating
priority as a full-fledged data dimension. 
To that end, given an item $x$, let us assume that it is given a
priority, $p(x)$, that is positively correlated to $x$'s importance.
So as to normalize item importance,
if such priorities are already defined on a logarithmic scale (like the
Richter scale), then we
define $x$'s \emph{rank}, $r(x)$,  as $r(x)=\lfloor p(x)\rfloor$
and we define $x$'s \emph{weight}, $w(x)$, as $w(x)=2^{p(x)}$.
Otherwise, if priorities
are defined on a uniform scale (like hyperlink in-degree on
the World-wide web), then we define
the $w(x)=p(x)$ and we define $r(x) = \floor{\log{w(x)}}$. We
further assume that weight is polynomial in the number of inputs. This 
assumption implies that there are $O(\log n)$ possible ranks, and 
that $\log W/w = O(\log n)$, where $w$ is a weight polynomial in $n$ 
and $W$ is the sum of $n$ such weights. 
Given these normalized definitions of rank and weight,
we desire efficient data structures that can support the following 
types of prioritized range queries:

\begin{itemize}
\item
\emph{Threshold query}:
Given a query range, $R$, and a weight, $w$, report 
the points in $R$ with rank greater than or equal to $\floor{\log{w}}$. 
\item
\emph{Top-$k$ query}:
Given a query range $R$ and an integer, $k$, report 
the top $k$ points in $R$ based on rank. 
\end{itemize}

\subsection{Prior Work}

As mentioned above, range reporting data structures are well-studied in the Computational
Geometry literature (e.g., see the excellent surveys by Agarwal~\cite{a-rs-97}
and Agarwal and Erickson~\cite{ae-rsir-99}).
In $\R^2$, 2- and 3-sided range queries can
be answered optimally using McCreight's priority search tree~\cite{m-pst-85}, which
uses $O(n)$ space and $O(\log n + k)$ query time.
Using the range trees of Bentley~\cite{b-mdc-80}, and the
fractional cascading technique of Chazelle and Guibas~\cite{cg-fc1ds-86}, 4-sided 
queries can be answered using $O(n\log n)$ space and $O(\log n + k)$ time.
In the RAM model of computation, 4-sided queries can be answered using 
$O(n\log^{\epsilon} n)$ space and $O(\log n + k)$ time~\cite{chazelle88}. 
Alstrup, Brodal, and Reuhe~\cite{alstrup2000} further showed that 
range reporting in $\R^3$ can be done with $O(n\log^{1+\epsilon} n)$ space 
and $O(\log n + k)$ query time in the RAM model.


More recently, Dujmovi\'c, Howat, and Morin~\cite{dhm-brt-09}
developed a data structure called a biased range tree which, assuming
that 2-sided ranges are drawn from a probability distribution, can 
perform 2-sided range counting queries efficiently. Afshani, Barbay, and Chan~\cite{afshani2009}
generalized this result, showing the existence of many instance-optimal algorithms.
Their methods can be viewed as solving an orthogonal problem
to the one studied here, in other words,
in that their points have no inherent weights in and of themselves
and it is the distribution of ranges that determines their importance.

\subsection{Our Results}

Given a set $S$ of $n$ points in the plane, where each point $p$
in $S$ is assigned a weight $w(p)$ that is polynomial in $n$, we provide a
data structure, called a \emph{priority range tree}, which accommodates fast 
three-sided orthogonal range reporting queries. In particular,
given a three-sided query range $R$ and a weight $w$, our data structure can
be used to answer a threshold query,
reporting all points $p$ in $R$ such that $\lfloor \log{w(p)} \rfloor \geq 
\lfloor \log{w} \rfloor$ in time $O(\log W/w + k)$, where $W$ is the sum of the 
weights of all points in $S$.
In addition, we can also support top-$k$ queries, reporting $k$ points
that have the highest $\lfloor \log{w(.)} \rfloor$ value in $R$
in time $O(\log \log n + \log W/w + k)$, where $w$ is the smallest weight
among the reported points. The priority range tree data structure occupies
linear space, and operates under the standard RAM model of computation.
Then, with a well-known technique for converting a 3-sided range
reporting structure into a 4-sided range reporting structure, we show how to construct a data
structure for answering prioritized 4-sided range queries with
similar running times to those for our 3-sided query structure.
The space for our 4-sided query data structure is larger by a
logarithmic factor.

\subsection{A Note About Distributions}
A key feature of the priority range tree is that it is distribution agnostic.
This distinction is crucial, since if the distribution of priorities is
fixed to be exponential, then a trivial data structure achieves the same results:
for i = $1$ to $\floor{\log w_{\text{max}}}$, create a priority search tree
$P_i$ containing all points with weight $2^{i}$ and above. Because the distribution is
exponential, the number of elements in $P_{i}$ is at most
$n/{2^i} \leq W/{2^i}$, and hence, querying $P_{\floor{\log w}}$ correctly 
answers the query and takes time $O(\log W/w + k)$. 
Furthermore, the space used for all data structures is $\sum_i n/2^i = O(n)$. 

However, such a strategy does not work for other distributions (including
power law distributions, which commonly occur in practice) since the 
storage for each data structure becomes too great to meet the desired 
query time and maintain linear space. Thus, our data structure provides query 
times approaching the information theoretic lower bound, in linear space, 
regardless of the distribution of the priorities.

\section{Preliminary Data Structuring Techniques}
In this section, we present some techniques that we use
to build up our priority range tree data structure.

\subsection{Weight-balanced Priority Search Trees}
\label{sec:one-dim}
Consider the following one-dimensional range reporting problem:
Given a set $S$ of $n$ points in $\R$, where each point $p$ has weight $w(p)$,
we would like to preprocess $S$ into a data structure so that we can report all 
points in the query interval $[a,b]$ with weight greater or equal to
$w$. Storing the points in a priority search tree~\cite{m-pst-85}, affords
$O(\log{n} + k)$ query time using linear space. \ifFull 
We show how to modify the priority search tree to get a query time
of $O(\log{W/w} + k)$. We begin by reviewing the priority search tree
data structure.

\subsection{Priority Search Trees}

The genius of the priority search tree is a that it is simultaneously a 
binary search tree on $x$-values and a heap on $y$-values.
To achieve this goal, we choose a point in $S$
that has the largest $y$ value and store it in the root. We then split
the remaining points into two sets of points $L$ and $R$ of 
roughly equal size such that the $x$-value of every point in $L$ is
less than or equal to the $x$-value of every point in $R$. We call this
strategy \emph{split-by-size}. We store
the largest $x$-value from the set $L$ in the root. We recursively
build the left and right subtrees using $L$ and $R$ respectively.

Given a query range $[a, b] \times [c, \infty)$ we search 
for $a$ and $b$ in binary search tree data structure. At some node,
which we call the $split$ node, the search for $a$ and $b$ diverges. 
We call the collection tree nodes discovered while search for $a$ and 
$b$ beyond the split node the \emph{fringe}.
For every tree node that we traverse on the search for $a$ and $b$ 
we must check whether it belongs to the query range. Once we have obtained the 
fringe, we have a collection of subtrees whose values are guaranteed to be
in the $[a, b]$ range of our query. Since these subtrees are heap ordered on 
$y$-values, we can easily search the subtrees layer by layer evaluating the 
$y$-values down the tree until we reach a $y$-value less than $c$.
The split-by-size strategy of building priority search trees ensures that
the height of the tree is $O(\log n)$. Time spent searching in the heaps 
can be amortized over the search depth and the answers reported, giving a
query time of $O(\log n + k)$.
\fi
\ifFull
\subsection{A Simple Weight-balanced Search Tree}
\fi
We can obtain a query time of $O(\log W/w + k)$ by ensuring that the priority
search tree is \emph{weight balanced}.

\begin{definition} We say a tree is weight balanced if item $i$ with
weight $w_i$ is stored at depth $O(\log W/w_i)$, where $W$ is the sum of the weights
of all items stored in the tree.
\end{definition}

To build this search tree, we use a trick similar to Mehlhorn's rule 2~\cite{mehlhorn75f}.
We first choose the item with the highest weight to be stored at the root.  
We then divide the remaining points into two sets $A$ and $B$ such
that the $x$-value of every point in $A$ is less than or equal to the
$x$-value of every point in $B$ and $\abs{\sum_{a\in A}w(a) -\sum_{b\in B}w(b)}$
is minimized. Finally, we store the maximum $x$ value from $A$ in the root to facilitate
searching, and then recursively build the left and right subtrees
on sets $A$ and $B$. We call this technique 
\emph{split by weight}. Priority search trees are 
built much the same way, except that $A$ and $B$ are chosen to have 
approximately the same cardinality, which we 
call \emph{split by size}. 

The resulting search tree is both weight balanced and heap ordered by weight, 
and can therefore be used to answer range reporting queries with the 
same procedure as the priority search tree. 

\begin{lemma} The weight-balanced priority search tree consumes $O(n)$ space,
and can be used to report all points in a query range $[a,b]$ with weight at least $w$ in 
time $O(\log(W/w) + k)$ where $W$ is the sum of the weights of all points in the tree.
\label{theorem-wb}
\end{lemma}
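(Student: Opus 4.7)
The plan is to verify the three claims of the lemma (space, correctness, running time) in turn, with the principal effort going into the weight-balance property. For space, each of the $n$ points is placed at exactly one node during construction (the maximum-weight point at the current root, then recursion on two disjoint subsets partitioned by $x$-coordinate), so the tree has exactly $n$ nodes and occupies $O(n)$ space.

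The heart of the proof is showing that the split-by-weight construction actually realizes weight balance: a point $p$ of weight $w_p$ ends up at depth at most $\log(W/w_p)$. I would argue inductively that the subtree rooted at any node $v$ has total weight at most $W/2^{d(v)}$, where $d(v)$ is its depth. For the inductive step, let $v$ have subtree weight $W_v$ and root item of weight $w_{r_v}$, which is the maximum weight among all items in that subtree. The remaining weight $W_v - w_{r_v}$ is partitioned into $A$ and $B$, with every point of $A$ to the left of every point of $B$, so that $|W_A - W_B|$ is minimized. A prefix-sum argument on the $x$-sorted remaining points shows that the minimum attainable $|W_A - W_B|$ is at most $w_{\max}'$, the largest weight among the non-root points; since $r_v$ was picked as the overall maximum, $w_{\max}' \le w_{r_v}$, and hence $\max(W_A, W_B) \le (W_v - w_{r_v} + w_{\max}')/2 \le W_v/2$. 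Each descent therefore at most halves the subtree weight, and since $p$ contributes $w_p$ to the weight of every subtree containing it, $w_p \le W/2^{d(p)}$, i.e.\ $d(p) \le \log(W/w_p)$.

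The query algorithm is exactly McCreight's: because the tree is simultaneously a BST on $x$ and a max-heap on weight, we walk toward $a$ and $b$ from the root, and at every sibling subtree whose entire $x$-range lies within $[a,b]$ we descend in heap order, reporting every node of weight at least $w$ and stopping as soon as we encounter a node of weight less than $w$ (by the heap property, nothing below such a node can qualify). For the running time, the search paths toward $a$ and $b$ can be truncated as soon as the current weight drops below $w$; the deepest node on the path with weight at least $w$ has depth $O(\log(W/w))$ by the weight-balance bound just established, so the search path together with the fringe of sibling subtrees hanging off it contributes $O(\log(W/w))$. Each fringe subtree then charges $O(1 + k_i)$ work in its heap-pruned descent, summing to $O(k)$ over all subtrees, and adding these yields the advertised $O(\log(W/w)+k)$ bound. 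The main obstacle is the weight-halving claim; once one notices that $w_{\max}' \le w_{r_v}$ exactly cancels the imbalance introduced by removing the heaviest item from the otherwise ordering-constrained split, the rest of the argument is standard priority-search-tree bookkeeping.
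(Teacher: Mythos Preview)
Your proposal is correct and follows essentially the same route as the paper: both arguments hinge on showing that the split-by-weight rule guarantees $W_{\text{child}}\le W_{\text{parent}}/2$, via the observation that the minimum achievable imbalance $|W_A-W_B|$ is bounded by the largest remaining weight, which in turn is at most the weight $w_{r_v}$ just removed for the root. Your prefix-sum justification for that imbalance bound is slightly more explicit than the paper's one-line remark, and you additionally spell out the space and query-time arguments that the paper leaves to the reader, but the core idea is identical.
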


\ifFull
\begin{proof} We prove by induction on the number of nodes stored
in the tree data structure.

\noindent\emph{Base Case} By construction, the highest weight item
is stored at the root. Let this item have weight $w$, then this one 
item is stored at depth 

\[0 = \log{1} = \log{w/w} = O(\log{W/w})\]

\noindent\emph{Inductive Step} Assume $j$ items are stored in the tree and
that each item is at the appropriate depth. At each step of our algorithm
every node has two companion sets $A$ and $B$. Choose a leaf with a least 
one nonempty companion set. If all leaves have empty companion sets, then 
we are done. Call the vertices on the path from root to the chosen leaf 
$x_0$, $\ldots$, $x_k$ where $x_i$ is at depth $i$. Call the nonempty 
companion set $C_k$ and let $W_i$ be the weight of the subtree rooted at $x_i$.
To put it another way, $W_i = w(A_i) + w(B_i) + w(x_i)$ where $A_i$ and $B_i$ 
are the companion sets for $x_i$.

Then choose the highest weight item $x_{k+1}$ from $C_k$, and make it a
 child of $x_k$. We show that the depth of $x_{k+1}$ is $O(\log{W/w(x_{k+1})})$.

Note that $\abs{w(A_i)-w(B_i)} \leq w(x_i)$ since $\max_{x\in A_i \cup B_i} w(x) \leq w(x_i)$

Let $x_i$ and $x_{i+1}$ be adjacent vertices on the path from $x_0$ to $x_{k+1}$, and suppose
without loss of generality that $x_{i+i}$ is the left child of $x_{i}$, then 
\begin{align*}
W_{i} &= w(A_i) + w(B_i) + w(x_i)\\
    &\geq w(A_i) + w(B_i) + w(A_i) - w(B_i)\\
    &=2w(A_i) \\
    &=2W_{i+1}
\end{align*}
Repeated application of this rule, gives us $W_{0} \geq 2^{k+1}W_{k+1}$.

Thus, $depth(x_{k+1}) = k+1 = \log{\frac{2^{k+1}W_{k+1}}{W_{k+1}}} \leq \log{\frac{W_0}{W_{k+1}}} \leq \log{W_0/w(x_{k+1})} = O(\log{W/w(x_{k+1})})$.
\end{proof}
\fi

\subsection{Persistent Heaps}

\begin{figure}[!t]
\begin{center}
\includegraphics[width=3.0in]{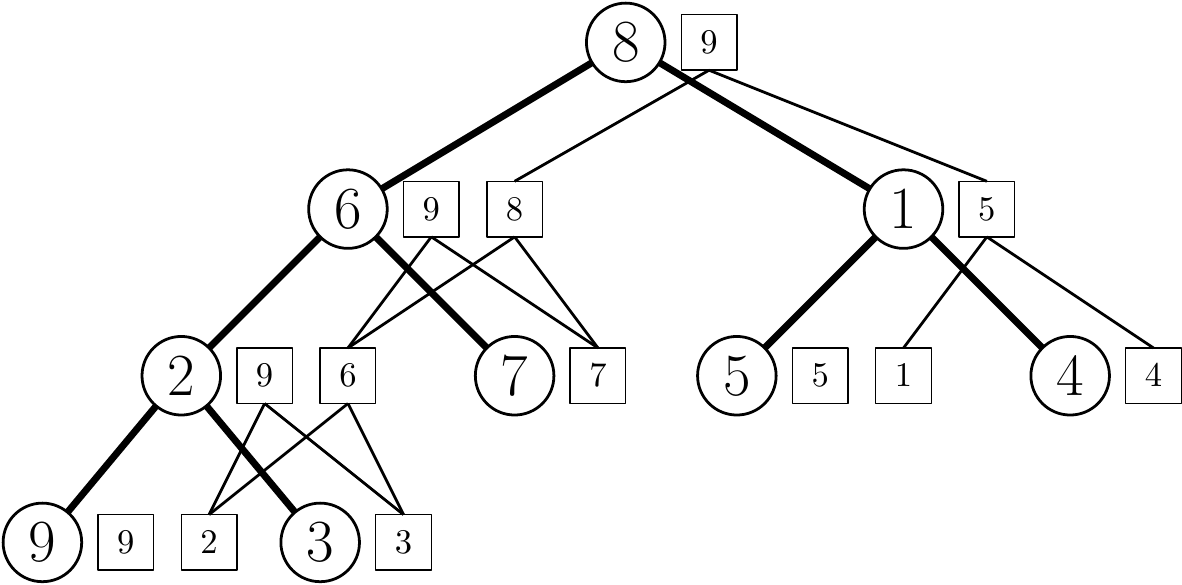}
\caption{A persistent heap. Circular nodes represent the original 
binary tree and square nodes represent heaps at each step
of the BuildHeap algorithm.}
\label{fig:hist-pst}
\end{center}
\end{figure}

The well-known BuildHeap algorithm can transform any complete binary tree 
into a heap in linear time~\cite{floyd64}. Using the node-copying method for
making data structures persistent~\cite{driscoll89}, we can maintain a record of 
the heap as it exists during each step of the BuildHeap algorithm,
allowing us to store a heap on every subtree in linear space.
We call this data structure a \emph{persistent heap}.

\begin{lemma} Let $T$ be a tree with $n$ nodes. If the BuildHeap algorithm runs in time $O(n)$
on $T$, then we can augment every node of $T$
with a heap of the elements in its subtree
using extra space $O(n)$.
\end{lemma}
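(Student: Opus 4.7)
The plan is to exploit the fact that BuildHeap processes nodes of $T$ in post-order (children before parent) and that, at the moment BuildHeap finishes with a node $v$, the subtree rooted at $v$ has already been transformed in-place into a valid heap on exactly the set of elements below $v$. If we can remember, for each $v$, the state of the evolving structure at that moment, we have the desired heap-per-subtree augmentation.

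To remember these intermediate states without paying for a fresh copy each time, I would feed the sequence of pointer-update operations performed by BuildHeap into the node-copying persistence framework of Driscoll et al.~\cite{driscoll89}. In that framework, each elementary pointer update to the current version produces $O(1)$ new nodes in the persistent structure, so the total extra space is proportional to the total number of pointer updates executed by BuildHeap. Since BuildHeap runs in $O(n)$ time on $T$ by hypothesis, it performs only $O(n)$ such updates, and therefore the persistent record occupies $O(n)$ space beyond $T$ itself. At every node $v$ of $T$ I would then store a single pointer to the version handle produced immediately after BuildHeap finishes sifting down at $v$; these $n$ handles take $O(n)$ additional space.

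To justify correctness, I would argue by induction on the post-order: assuming that when BuildHeap reaches $v$ the subtrees rooted at $v$'s children already contain valid heaps on their respective element sets (which is guaranteed because we have stored their version handles immediately after processing them), the sift-down at $v$ merges these with the element originally at $v$ into a valid heap on the entire subtree rooted at $v$. Taking the version snapshot at exactly this moment yields the heap we want to associate with $v$, and because persistence never discards past versions, snapshots taken earlier for descendants remain accessible.

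The main obstacle I anticipate is being precise about what ``the heap at $v$'' means as a persistent object: a naive reading would suggest we need to isolate the portion of the persistent structure lying in $v$'s subtree, but in fact the root pointer stored at $v$ together with the standard persistent-access routine suffices, since BuildHeap only touches nodes within $v$'s subtree while processing $v$ and its descendants, so the version reachable from that root pointer is exactly the desired subtree-heap. Once this bookkeeping is pinned down, the $O(n)$ space bound follows directly from the linear update count of BuildHeap and the $O(1)$-per-update overhead of node-copying persistence.
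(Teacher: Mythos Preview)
Your proposal is correct and follows essentially the same approach as the paper: both apply the node-copying persistence technique of Driscoll et al.\ to the sequence of swaps performed by BuildHeap, observing that $O(n)$ running time means $O(n)$ swaps and hence $O(n)$ extra nodes. Your write-up is more detailed---you spell out the post-order induction for correctness and the version-handle bookkeeping---whereas the paper's proof is a three-line sketch that simply says each swap spawns two new nodes linked to the previous stage, but the core idea is identical.
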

\begin{proof}
For each swap operation of the BuildHeap algorithm, we do not swap within
the tree, but we create two extra nodes, add the swapped elements
to these nodes, and add links to the heaps from 
the previous stages of the algorithm (see Fig.~\ref{fig:hist-pst}). 
\end{proof}

Given $n$ points in $\R^2$, this strategy can be used as a substitute
for the priority search tree, by first building complete binary search tree
on the $x$ values, and then building a persistent heap using the $y$
values as keys.

\ifFull
The magic of the priority search tree data structure is that we simultaneously 
have a heap on $y$ values
and a binary search tree on $x$ values within linear space. With much less
elegance, we can get the same query time by building a balanced binary 
search tree on $x$ values and augmenting each node with a heap on the $y$ 
values stored in its subtree. Naively building such a data structure 
as we have described requires $\Theta(n\log n)$ space. However
we can cut the space down to $O(n)$ using the textbook BuildHeap
algorithm as inspiration.

First, we build a complete binary search tree on the $x$-values, such that every level
of the binary tree is full except the last one. We imagine running the linear
time, recursive BuildHeap algorithm on this tree, using the $y$-values as keys;
however, instead of swapping elements within the tree structure (which would
destroy our binary search tree), we create copies of the elements that move 
between successive calls to BuildHeap. Once done building
all our copies, every subtree has both a binary search tree on $x$-values 
and a heap on $y$-values stored in
separate linked structures . Since the BuildHeap algorithm runs 
in linear time, and we create a constant number of pointers and nodes for each 
operation in the BuildHeap algorithm, our entire data structure uses 
only linear space. To perform a range reporting query, search for $x_1$ and $x_2$
in the binary search tree, testing for membership of each item searched. 
Then for each subtree on the fringe, we search its heap for $y$-values 
satisfying the query.
\fi

\subsection{Layers of Maxima}
We now turn our attention to the following two problems: 
Given $l$ points in $\Z_m \times \R$, preprocess the points into a data structure, 
to quickly answer the following queries.
\begin{enumerate}
\item \emph{Domination Query}: Given a query point $(x,y)$, report all points $(p_x, p_y)$ such that $x\leq p_x$ and $y \leq p_y$.
\item \emph{Maximization Query}: Given a query value $y$, report a point with the largest $x$-value, such that its $y$-value 
is greater than $y$.
\end{enumerate}
This problem can be solved optimally using two techniques: we form the 
layers of maxima of the points and use fractional cascading to reduce
search time.

A point $p \in \R^2$, \emph{dominates} a point $q \in \R^2$ iff each
coordinate of $p$ is greater than that of $q$. A point $p$ is said 
to be a \emph{maximum} of a set $S$ iff no point in $S$ dominates $p$.
Given a set $S$, if we find the set of maxima of $S$, remove the maxima 
and repeat, then the resulting sets of points are called the 
\emph{layers of maxima} of $S$. 

\begin{figure}[!t]
\begin{center}
\begin{tabular}{c@{\hspace*{0.75cm}}c}
\includegraphics[width=2in]{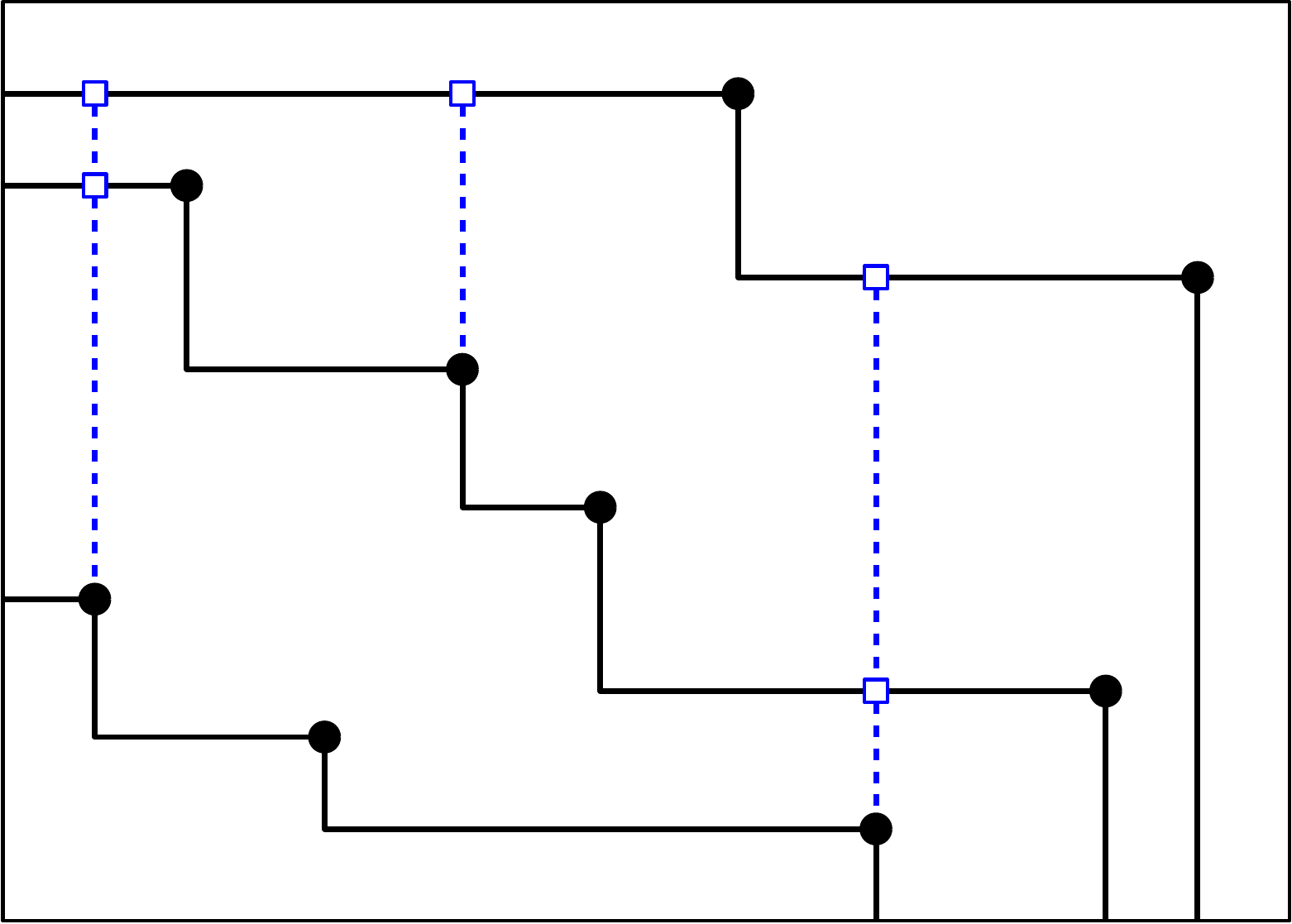}&
\includegraphics[width=2in]{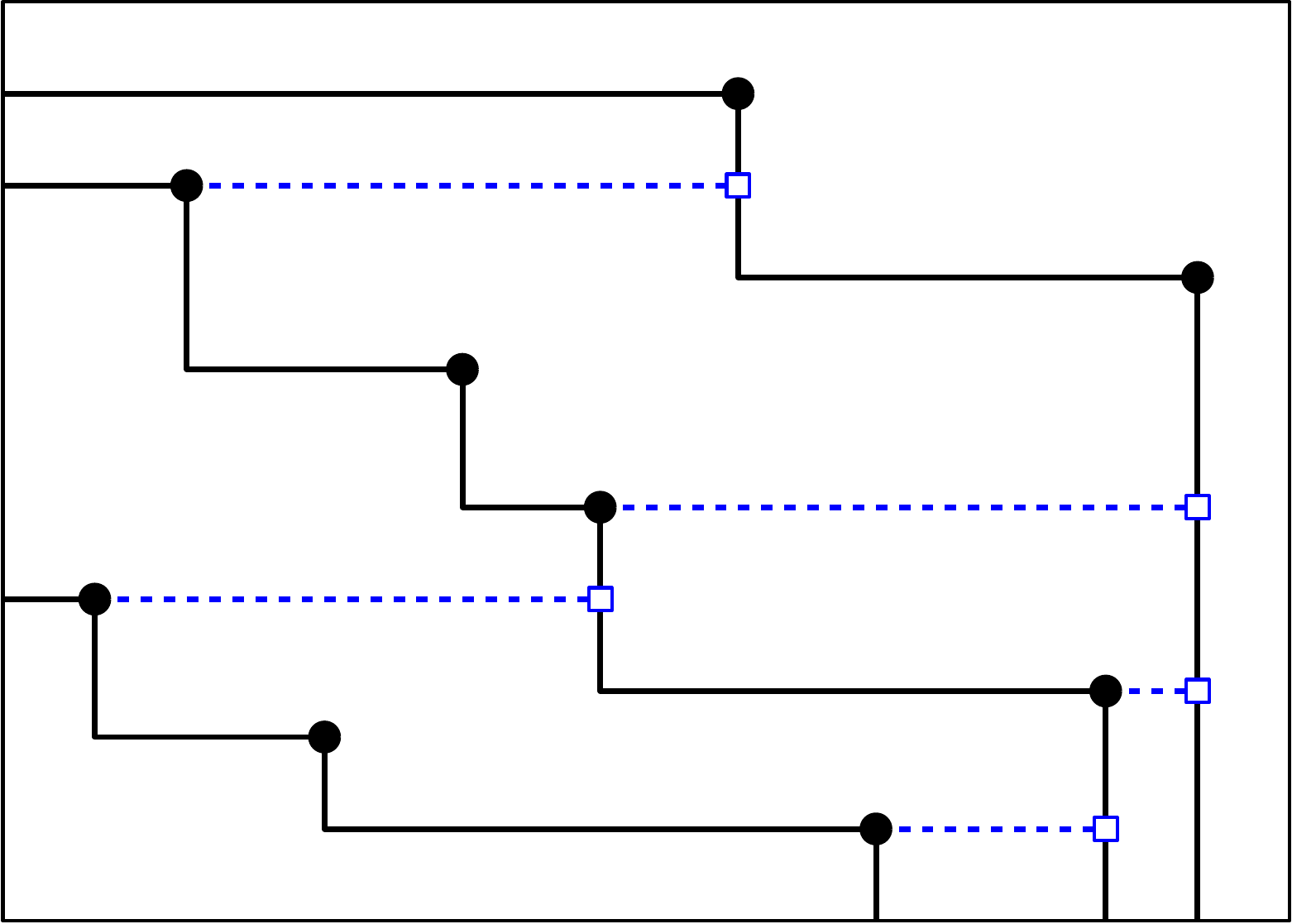}\\
(a)&(b)
\end{tabular}
\caption{The layers of maxima with fractional cascading (a) from
bottom to top and (b) from left to right.}
\label{fig:layers}
\end{center}
\end{figure}

We begin by constructing the layers of maxima of 
the $l$ points. We then form a graph from the layers of maxima by creating a vertex
for each point and connecting vertices that are in the same layer
in order by $x$ coordinate.

We first fractionally cascade the points from bottom to top, sending up
every other point from one layer to the next, including points copied from
previous layers (see Fig.~\ref{fig:layers}(a)). We then repeat the same
procedure, fractionally cascading the points from left to right 
(see Fig.~\ref{fig:layers}(b)).
\ifFull Each point stores a link to its predecessor and successor
point of both types (copied and original) in its layer. 
\fi
For bottom-to-top fractional cascading, we create $m$ entry 
points into the top layer our data structure stored as an array indexed by $x$ value.
Each entry point stores one pointer to the maxima in the top layer that succeeds it in
$x$-value.

To answer domination queries, we enter the catalog at index $x$,
reports all points on the current layer that match the query, 
then jump down to the next layer and repeat.
Each answer can be found with a constant amount of searching. Therefore,
the query takes time $O(\max\{k, 1\})$ where $k$ is the output size. 

To answer maximization queries, we create a catalog of $O(l)$ entry 
points on the right. Each entry point stores a pointer to a point 
(copied or not) on the top layer of maxima. Since the domain of 
$y$ is not constrained to the integers, we perform a search for 
our query $y$ among the entry points and immediately jump to our answer. 
This data structure gives us $O(l)$ space and $O(\log{l})$ query time.

We now have all the machinery to discuss the priority range tree
data structure.

\section{The Priority Range Tree}

In this section, we present a data structure for three-sided
range queries on prioritized points. We assume that each point $p$
has a weight $w(p)$, and we define $r(p)=\floor{\log w(p)}$ to be the rank
of $p$. Given a 
range $R = [x_1, x_2] \times [y, \infty)$ our data structure
accommodates the following queries.

\begin{enumerate}
\item \emph{Threshold Queries}: Given a query weight $w$, report all points $p$ in $R$
whose weight satisfies $\floor{\log{w(p)}} \geq \floor{\log w}$.
\item \emph{Top-$k$ Queries}: Given an integer $k$, report the $k$ points in $R$ with
the highest $\floor{\log{w(\cdot)}}$ value.
\end{enumerate}

\ifFull
There are two straightforward solutions to the problem
of extending the above one-dimensional solution to 3-sided range reporting in 
two dimensions, both of which are suboptimal, but are worth mentioning:

\paragraph{Suboptimal query time.}
For each rank $r$, create a priority search tree $T_r$ containing 
all points with rank $r$, and search the priority search trees 
for ranks greater than or equal to the 
query rank. With this approach, we will perform
$\floor{\log{w_{\max}}} - \floor{\log{w}} = O(\log{W/w})$ different
priority search tree queries, with a $O(\log{n_r} + k_r)$ query time
in the priority search tree for rank $r$. This solution has $O(n)$ space and 
$O(\log{W/w} \log{n} + k)$ query time.

\paragraph{Suboptimal space.} For each rank $r$, create a priority 
search tree $T_r$ containing all points with rank greater than or equal to
$r$. We only need to perform search in one priority search tree. 
This solution has $O(n\log{n})$ space and $O(\log{n} + k)$ query time.

Our priority range tree data structure will combine the best properties 
of both of these suboptimal solutions. To build up to our solution,
we first introduce a data structure for the simpler problem of
reporting one element satisfying a query, if it exists.
\fi

We first describe a data structure that has significant storage
requirements, to illustrate how to perform each query.
We then show how to reduce the space requirements.

For our underlying data structure, we build up a weight-balanced
priority search tree on the $x$-values of our points.
On top of this tree, we build one persistent heap for each different 
rank. That is, given rank $r$, we build a persistent heap on the 
$y$-values of points that have rank $r$.
Of course, points with different rank must be compared in this scheme, 
therefore, when building a persistent heap for rank $r$, we treat 
points with rank not equal to $r$ as dummies with $y$-value $-\infty$. 
Once we are done building up these persistent heaps, 
each node has $O(\log{n})$ heaps rooted at it, one for each rank. 
For each node, we store a catalog, which is an array of roots of each of the 
$O(\log n)$ heaps, indexed by rank.
On top of each catalog, we build the fractionally-cascaded layers-of-maxima 
data structure, described in the previous section, storing a
coordinate (\emph{rank}, \emph{y-value}) for each root of the $O(\log n)$ heaps.

\subsection{Threshold Queries} 

\ifFull
\begin{enumerate}
\item Search for $x_1$ and $x_2$ in the weight-balanced priority search tree,
down to a depth of $O(\log(W/w))$, testing for membership of each point encountered
on the search path. 
\item For each subtree on the fringe, perform
a layers-of-maxima query to determine which
heaps, beyond the query threshold, contain $y$ values in our query range.
\item Within each heap returned by our layers-of-maxima query, we search
layer-by-layer, returning elements that match our query. 
\end{enumerate}
\fi

We first search for $x_1$ and $x_2$ down to depth 
$O(\log W/w)$ in our weight-balanced priority search tree, checking each point 
for membership as we make our way down the tree. 
\ifFull
We call the node where the search for $x_1$ and $x_2$ 
diverges the \emph{split node},
and the set of nodes discovered beyond the split node is called the \emph{fringe}.
\fi

Each node on the \ifFull fringe \else search paths to $x_1$ and $x_2$ \fi may 
have left or right subtree whose $x$ values
are entirely in the range $[x_1,x_2]$. For each such subtree, we query 
the layers-of-maxima data structure to find all points in the catalog 
that dominate $(\floor{\log w}, y)$.
If any points are returned, then we perform a layer-by-layer search
through the heaps stored for each rank. We return any points that
satisfy the query $y$ value.

Each layers-of-maxima search can be charged to the search depth
or an answer, and each search within a heap can be charged to
an answer. Therefore, we get the desired running time of $O(\log W/w + k)$.

\subsection{Top-$k$ Queries}
This query type is slightly more involved, so we begin by describing
how to find one point of maximum rank in a query range. 

\paragraph{Max-reporting.}
Given a range $R = [x_1, x_2] \times [y, \infty)$, the \emph{max-reporting}
problem is to report one point in $R$ with maximum rank. 

A first attempt is to search for $x_1$ and $x_2$,
and run a maximization query for each layers-of-maxima data structure along
the search path, maintaining the point in $R$ with maximum rank found so far. 
Although this is a correct algorithm, there are two issues with this approach, which are brought
about because we do not have a query weight:  

\begin{enumerate}
\item The search may reach depth $\omega(\log W/w')$, where $w'$ is the weight of the answer. 

\item Each query to a layers-of-maxima data structure takes time $O(\log\log n)$.
\end{enumerate}

Therefore, we maintain a depth limit, initially $\infty$, 
telling our algorithm when to stop searching. If there is no
point in $R$, then our search reaches a leaf at depth $O(\log n)$
and stops. If we find a point $p$ in $R$, we 
decrease the depth limit to $c\log W/w(p)$, where $c$ is the constant
hidden in the big-oh notation for the weight-balanced priority
search tree. If our search reaches the depth limit, then there
are no points with greater rank lower in the tree, and we
can stop searching. Otherwise, every time we encounter point 
in $p'$ in $R$ with higher rank $r'=\floor{\log w'}$
we decrease the depth limit to $c\log W/w'$. 

We reduce the layers-of-maxima query time by fractionally cascading
the layers-of-maxima data structure across the entire search tree,
allowing us to do one $O(\log\log n)$-time query in the root catalog,
and $O(1)$ extra work in the catalogs on the search path.

With these two changes, the search takes time $O(\log\log n + \log W/w')$ total
($O(\log n)$ if no such point exists).

\paragraph{Top-$k$ Reporting}

We now extend the max-reporting algorithm to report $k$ points 
with highest rank in time $O(\log \log n + \log W/w' + k)$ under 
the standard RAM model.

If the top $k$ points all have the same rank, then we can use
our max-reporting algorithm to find the point with highest
rank, and use the threshold queries to recover all $k$ points.
However, if we have to find points with lower rank, we want 
to avoid doing an expensive search for each rank. We can 
accomplish this goal with a priority queue.

Perform an initial max-reporting search. Every point in $R$ 
encountered during our search is inserted into a priority 
queue with key equal to its rank. 
Along with each point we store a link back to the location in the 
layers-of-maxima data structure where it was found. 
When we finish the initial max-reporting query, we iterate
the following process:

\begin{enumerate}
\item Remove the point $p$ with maximum rank $r$ from the priority queue.
\item Enter the layers-of-maxima data structure at point $p$, and insert both
the predecessor of $p$ on the same layer and on the layer below into the
priority queue. Each one of these points are candidates for reporting.
We then mark points to ensure that duplicates are not added to the priority queue.
\item Search in the heap data structure where point $p$ was found, 
and report any additional points it contains that are in $R$ (without exceeding $k$ points). 
\item If we have reported $k$ points then we are done. Otherwise, look 
at the point with maximum rank in the priority queue.
If its rank $r'=\floor{\log{w'}}$ is less than $r$, then we increase our search depth to $c\log W/w'$,
and continue searching, adding points the priority queue as before.
\end{enumerate}



This priority queue can be efficiently implemented in the standard RAM model.
We store our priority queue as an array $P$, indexed by key.
\ifFull We can index by key since our algorithm uses the ranks of points as keys, which are integers values, 
and there are $O(\log n)$ different ranks. \fi
We store in cell $P[r]$ a linked list of elements with 
key $r$. Additionally, we maintain two values,
$r_{\text{max}}$ and $r_{\text{min}}$, which is the maximum (minimum) key,
of all elements in the priority queue. 
We insert an item with key $k$, by adding it to the linked list $P[k]$ in $O(1)$ time,
and updating $r_{\text{max}}$ and $r_{\text{min}}$. To remove an item with the maximum 
key $k$, we remove it from the linked list $P[k]$. If $P[k]$ becomes empty, then 
we update $r_{\text{max}}$ (and possibly $r_{\text{min}}$).

We now show that our top-$k$ reporting algorithm has running time 
$O(\log \log n + \log W/w' + k)$.
We spend an initial $O(\log \log n)$-time search for our fractional cascading.
Our initial search and subsequent extensions of the search path 
takes time $O(\log W/w')$, by virtue of our depth limit. For each heap
in which we perform a layer-by-layer search, we can charge the search time
to answers reported. All that remains to be shown is that the priority 
queue operations do not take too much time.

For each point discovered in a layers-of-maxima query along the search path, we perform at most one 
insert into the priority queue, thus we do $O(\log W/w')$ of these insertions into the priority queue, each 
one taking constant time. Furthermore, we can charge all of our priority 
queue remove operations (excluding pointer updates) to answers. Updating the
priority queue pointers does not negatively impact the running time, 
since the total number
of array cells we march through to do the pointer updates is $O(\log W/w')$. For each
remove operation, we may perform up to two subsequent insertions, which we can charge to
answers. Therefore, we get a total running time of $O(\log \log n + \log W/w' + k)$.

As described, the priority range tree consumes $O(n\log^2 n)$ 
space, since each persistent heap may consume $O(n\log n)$ space
and we store $O(\log n)$ such persistent heaps.\footnote{Each persistent
heap uses space $O(n\log n)$ instead of $O(n)$ because there is
no guarantee that BuildHeap will run in linear time on our underlying tree.}

\ifFull
\begin{figure}
\begin{center}
\includegraphics[width=3.0in]{bucket.png}
\caption{The structure our search tree. Shallow nodes are placed using
the split-by-weight strategy, deep nodes are placed using the split-by-size strategy.
Subtrees with between $1/2\log n$ and $2\log n$ deep nodes are bucketed.}
\end{center}
\end{figure}
\fi

\subsection{Reducing the Space Requirements}
\ifFull
As described, this data structure requires $O(n\log^2 n)$ space, making it
no better than our suboptimal space solution given above. 
\fi
We can reduce the space to $O(n)$ by making several modifications to the search tree.
We build our underlying tree using the split-by-weight strategy down to depth
$\frac{1}{2}\log{n}$, then switch to a split-by-size strategy for the deeper 
elements, forcing these split-be-size subtrees to be complete. By 
switching strategies we do not lose the special
properties that we desire: the tree is still weight balanced and heap 
ordered on weights. Finally, we do not store auxiliary data structures for 
split-by-size subtrees with between $\frac{1}{2}\log{n}$ 
and $2\log{n}$ elements in them, we only store one catalog for
each such subtree. We call these subtrees \emph{buckets}.


\begin{lemma}
The priority range tree consumes $O(n)$ space.
\end{lemma}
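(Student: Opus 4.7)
The plan is to bound each of the four main storage categories by $O(n)$: the underlying weight-balanced tree, the catalogs at each node, the layers-of-maxima structures on each catalog, and the persistent heaps (one per rank). The first is immediate since the tree has $n$ nodes; the interest lies in how the split-by-weight/split-by-size partitioning and the introduction of buckets keep the remaining three linear.

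First I analyze the shape of the modified tree. The top split-by-weight portion has depth at most $\tfrac{1}{2}\log n$ and hence at most $\sqrt n$ nodes. Below each of its leaves hangs a complete split-by-size subtree, whose lowest complete subtrees of size between $\tfrac{1}{2}\log n$ and $2\log n$ are declared buckets. Since each bucket contains $\Theta(\log n)$ points and the buckets partition most of the tree, there are $\Theta(n/\log n)$ buckets, and consequently only $O(n/\log n)$ tree nodes lie outside any bucket.

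Bounding the catalogs and the layers-of-maxima is then direct. Each of the $O(n/\log n)$ non-bucket nodes carries one catalog of $O(\log n)$ heap-root pointers, contributing $O(n)$ in total; each of the $\Theta(n/\log n)$ buckets stores one additional catalog of the same size, again summing to $O(n)$. Since the layers-of-maxima data structure on a catalog uses space linear in the catalog's size, these contribute another $O(n)$.

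The main obstacle is bounding the persistent heaps. By the persistent-heap lemma, a persistent heap on a tree $T$ uses extra space proportional to the time BuildHeap spends on $T$, so I need to bound BuildHeap on the non-bucket portion (persistent heaps are not maintained inside buckets). The middle of this portion is a union of complete binary subtrees, on which BuildHeap is linear in subtree size; summed over all such subtrees this is $O(n/\log n)$. The top has only $O(\sqrt n)$ nodes, so even the pessimistic $O(\sqrt n\,\log n)$ bound for BuildHeap there is absorbed, since $\sqrt n \log^2 n = o(n)$. Consequently each per-rank persistent heap occupies $O(n/\log n)$ extra space, and summing over the $O(\log n)$ possible ranks yields $O(n)$. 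Adding the four bounds gives the claimed linear space.
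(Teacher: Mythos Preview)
Your argument is correct and follows essentially the same route as the paper: you partition the tree into the $O(\sqrt n)$ split-by-weight nodes, the $O(n/\log n)$ non-bucket split-by-size nodes, and the buckets, then bound the catalog/layers-of-maxima storage by $O(\log n)$ per non-bucket node and per bucket, and bound each of the $O(\log n)$ persistent heaps by $O(\sqrt n\log n)$ swaps in the top plus $O(n/\log n)$ swaps in the complete middle portions. The paper carries out exactly this accounting, so there is no substantive difference in approach.
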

\begin{proof}
Each layers-of-maxima data structure uses $O(\log n)$ storage.
We store one such data structure with each split-by-weight node
and one for each split-by-size node excepting those
in subtrees with between $1/2\log n$ and $2\log n$ elements.
There are $O(\sqrt{n})$ split-by-weight nodes, since they are all 
above depth $1/2\log n$. Each subtree $T_i$ rooted at depth $1/2\log n$
of size $n_i$ will have at most $2n_i/\log n$ nodes that
store the auxiliary data structure. Therefore, the total space 
for layers-of-maxima data structures is
$O(\sqrt n\log n) + \sum_iO(n_i/\log n)O(\log n) = O(n)$.

For the persistent heaps, we ensure that each subtree $T_i$
rooted at depth $1/2\log n$ is complete, on which we know BuildHeap 
will run in linear time. If $T_i$ contains fewer than $1/2\log n$ nodes,
then $T_i$ has $O(\log n)$ heap nodes. Otherwise, each subtree $T_i$ 
stores $O(n_i/\log n)$ heap nodes per rank. 

Each node above depth $1/2\log n$ can contribute $O(\log{n})$ swaps for each heap.
Each heap therefore requires $O(\sqrt{n}\log n + \sum_i n_i/\log{n}) = O(n/\log{n})$ space. Since we have $O(\log{n})$ heaps, our data structure requires
$O(n)$ space.
\end{proof}

These structural changes affect our query procedures. In particular,
there are three instances where the bucketing affects the query:
\begin{enumerate}
\item If the initial search phase hits a bucket, then we
exhaustively test the $O(\log n)$ points in the bucket. 
We only reach a bucket if the search path is of length $\Omega(\log n)$
and therefore we can amortize this exhaustive testing over the search. 

\item If we reach a bucket during our layer-by-layer search through a heap,
then exhaustively searching through the bucket is not an option, as this
would take too much time. Instead, we augment the layers-of-maxima data structure 
for the bucket so we can walk through lower $y$ values with the same rank
as the maxima (possibly producing duplicates). 

\item By the very nature of the BuildHeap algorithm, it is possible 
that points in $R$ were DownHeaped
into the buckets, and that information is also lost. Therefore, when looking
layer by layer through the heaps we also need to test for membership of
the tree nodes in addition to the heap nodes (also possibly producing duplicates).

\end{enumerate}

\ifFull
This augmentation allows us to perform a search very similar to a priority 
search tree, albeit at a significant storage cost, and with additional 
overhead to search through each catalog for $y$-values matching our query.
\fi

Each point matching the query will be encountered at most three times,
once in each search phase. Therefore, we can avoid returning duplicates 
with a simple marking scheme without increasing the reporting
time by more than a constant factor.

\begin{theorem}
The priority range tree consumes $O(n)$ space and can be used to answer 
three-sided threshold reporting queries with rank above $\floor{\log w}$ in 
time $O(\log W/w + k)$, and top-$k$ reporting queries in time 
$O(\log\log n + \log W/w' + k)$, where $W$ is the sum of the weights
of all points in $S$, $w'$ is the smallest weight of the reported 
points, and $k$ is the number of points returned by the query.
\end{theorem}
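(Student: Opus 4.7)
The plan is to assemble the theorem from three ingredients already laid down in the preceding text: the space lemma, the amortized time analyses for the threshold and top-$k$ queries on the unreduced data structure, and a verification that the bucketing modification preserves both correctness and the time bounds.

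First I would invoke the preceding lemma directly for the $O(n)$ space bound, so that nothing new needs to be said about storage. For the threshold query time, I would structure the argument as a charging scheme: the initial search for $x_1$ and $x_2$ visits $O(\log W/w)$ nodes by the weight-balanced guarantee of Lemma~\ref{theorem-wb}, and at every such node the layers-of-maxima query to find heaps dominating $(\floor{\log w}, y)$ uses $O(1)$ time per point reported (after the initial entry). The layer-by-layer search within each returned heap is charged against its own output, since each new layer descended yields at least one reported point until the search terminates. Summing gives $O(\log W/w + k)$.

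For the top-$k$ query, I would account separately for (a) the one-time $O(\log \log n)$ cost to enter the fractionally cascaded catalog at the root, (b) the $O(\log W/w')$ cost of the initial max-reporting search and its subsequent extensions, justified by the depth-limit argument in the text (whenever we find a better candidate $p'$ we shrink the limit to $c\log W/w(p')$, and $w'$ is the smallest reported weight), (c) the heap searches, amortized against reported points, and (d) the priority queue operations. For (d) I would argue that each insertion is $O(1)$ because priorities are integer ranks bounded by $O(\log n)$ and the queue is an array of buckets with $r_{\max}$, $r_{\min}$ pointers; the total number of insertions is bounded by the number of catalog points touched on the search path plus two per reported point, so $O(\log W/w' + k)$ in total. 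Pointer-update walks between consecutive $r_{\max}$ values telescope to $O(\log W/w')$ overall. Summing gives $O(\log\log n + \log W/w' + k)$.

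The step I expect to be the main obstacle is showing that the space-reducing modifications (split-by-weight at the top, split-by-size and bucketing at the bottom) do not invalidate these time bounds or produce missing or duplicated output. I would handle this by addressing each of the three bucket-related issues listed after the space lemma: an exhaustive scan of a bucket is $O(\log n)$ and is only triggered once a search path has length $\Omega(\log n)$, so it is absorbed by the existing $O(\log W/w)$ or $O(\log W/w')$ term; a layer-by-layer heap search that descends into a bucket is handled by the augmented layers-of-maxima catalog within the bucket, whose traversal still charges $O(1)$ per reported point; and potential misses due to BuildHeap's internal swaps are covered by also testing tree nodes during the layer-by-layer walk. Each candidate can be produced by at most three distinct routes, so a constant-time marking scheme eliminates duplicates without affecting asymptotic times. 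Assembling these three remarks with the earlier analyses yields the stated theorem.
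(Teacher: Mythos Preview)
Your proposal is correct and follows essentially the same approach as the paper: the theorem is presented there as a summary of the preceding discussion, and you have accurately reassembled the space lemma, the two query-time analyses (threshold via charging to search depth and output, top-$k$ via the $O(\log\log n)$ cascade entry, the depth-limit argument, and the array-based priority queue accounting), and the three bucket-related fixes with the marking scheme for duplicates. There is nothing substantively different between your plan and the paper's own development.
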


\section{Four-sided Range Reporting}
Our techniques can be extended to four-sided range queries with an added
logarithmic factor in space using a twist on a well-known transformation. 
Given a set $S$ of points in $\R^2$, we form a weight-balanced 
binary search tree on the $x$-values
of the points, such that the points are stored in leaves (e.g., a biased 
search tree~\cite{bst-bst-85}). For each internal node, we store 
the range of $x$-values of points contained in its subtree. 
For each internal node $x$ (except the root), we store a priority range tree $P_x$
on all points in the subtree: 
if $x$ is a left child then $P_x$ answer queries of the form 
$[a, \infty) \times [c,d]$, if $x$ is a right child then $P_x$
answers queries of the form $(-\infty,b] \times [c, d]$.
We can answer four-sided query given the range $[a,b] \times [c,d]$, by doing the following:

We first search for $a$ and $b$ in in the weight-balanced binary search tree. 
Let $s$ be the node where the search for $a$ and $b$ diverges, then
$a$ must be in $s$'s left subtree $\text{left}(s)$ and $b$ must be in $s$'s 
right subtree $\text{right}(s)$. Then we query $P_{\text{left}(s)}$ for points
in $[a, \infty) \times [c, d]$ and query $P_{\text{right}(s)}$ for points in 
$(-\infty, b] \times [c, d]$ and merge the results. In the case of top-$k$ queries, 
we must carefully coordinate the search and reporting in each priority range 
tree, otherwise we may search too deeply in one of the trees or report 
incorrect points.

\section{Conclusion}
Our priority range tree data structure can be used to report points in a three-sided 
range with rank greater than or equal to $\floor{\log w}$ in time $O(\log W/w + k)$,
and report the top $k$ points in time $O(\log\log n + \log W/w' + k)$, where
$w'$ is the smallest weight of the reported points, using
linear space. These query times extend to four-sided ranges with 
a logarithmic factor overhead in space. Our results are possible because of 
our reasonable assumptions that the weights are polynomial in $n$, and
that the magnitude of the weights, rather than the specific weights themselves,
are more important to our queries. 

\subsubsection*{Acknowledgments.}
This work was supported in part by NSF grants 0724806, 0713046, 0830403, and ONR grant N00014-08-1-1015.
\ifFull
\clearpage
\fi

{
\ifFull
\bibliographystyle{splncs03}
\bibliography{geom,range}
\else

\fi
}

\end{document}